\documentclass[aps,prd,reprint,superscriptaddress,nofootinbib,amsmath,amssymb,floatfix]{revtex4-2}
\usepackage[T1]{fontenc}
\usepackage{graphicx}
\usepackage{bm}
\usepackage{booktabs}
\usepackage{tikz}
\usepackage{verbatim}
\usetikzlibrary{arrows.meta,calc}
\usepackage[colorlinks=true,linkcolor=blue,citecolor=blue,urlcolor=blue]{hyperref}
\usepackage{amsmath,amssymb,amsthm}
\newtheorem{theorem}{Theorem}
\newtheorem{lemma}{Lemma}
\newtheorem{corollary}{Corollary}

\newcommand{\ind}{\operatorname{ind}}

\begin{document}

\title{A No-Go Theorem for Nonspherical Bounces with a Single Negative Mode}

\affiliation{School of Data Science and Artificial Intelligence, Wenzhou University of Technology, Wenzhou, China}

\author{Guangshang Chen}
\affiliation{
Key Laboratory of Particle Physics and Particle Irradiation (MOE), Institute of Frontier and Interdisciplinary Science,
Shandong University, Qingdao 266237, P. R. China
}

\author{Jianpeng Dai}
\email{Corresponding author, daijp1811833596@gmail.com}
\affiliation{School of Data Science and Artificial Intelligence, Wenzhou University of Technology, Wenzhou, China}

\author{Yang Xiao}
\email{Corresponding author, xiaoyangphy@gmail.com}
\affiliation{
School of Physics, Henan Normal University,
Xinxiang 453007, P. R. China
}

\date{\today}

\begin{abstract}
False vacuum decay is conventionally described by a spherically symmetric bounce with a single negative fluctuation mode. This raises a fundamental question: can the single negative mode requirement itself enforce spherical symmetry, independently of action minimization?
For a single real scalar field with a smooth potential in flat spacetime, we rigorously prove that every localized nonspherical saddle with zero initial velocity has at least three negative modes. Additional reflection symmetries further tighten these bounds, as illustrated in a sextic scalar model. Our findings exclude nonspherical bounce candidates with a single negative mode, providing a complementary spectral foundation for the spherical bounce description of false vacuum decay.
\end{abstract}

\maketitle

\section{Introduction}
False vacuum decay through the nucleation of true vacuum bubbles is a fundamental nonperturbative process in quantum field theory~\cite{Devoto:2022qen,Coleman:1977py,Callan:1977pt}. It has important implications ranging from particle cosmology to condensed matter physics~\cite{Caprini:2019egz,Yin:2024hjm,Lagnese:2021grb,Athron:2023xlk,Mazumdar:2018dfl}. In the early Universe, first-order phase transitions can generate a stochastic gravitational wave background that may be detectable by future space-based observatories such as LISA~\cite{amaro2017laser}, TianQin~\cite{luo2016tianqin} and Taiji~\cite{ruan2020taiji}. Meanwhile, recent experiments with ultracold atomic gases have demonstrated analogues of false vacuum decay through bubble nucleation, offering new opportunities to investigate this process in a laboratory environment~\cite{Zenesini:2023afv,Cominotti:2025qia}. These developments underscore the importance of a deeper theoretical understanding of vacuum bubble nucleation.

The semiclassical description of true vacuum bubble nucleation is based on bounce solutions of the Euclidean field equations~\cite{Coleman:1977py,Callan:1977pt,rubakov2009classical}. Since the nucleation rate depends exponentially on the Euclidean action, the solution with the least action plays a central role in determining the dominant tunneling process. For a broad class of scalar potentials, the least-action solution of a single scalar field was shown to be spherically symmetric~\cite{Coleman:1977th}. This result was subsequently extended to the case of multiple scalar fields~\cite{Blum:2016ipp}. The framework was also generalized to finite temperatures, where nucleation is described by three dimensional critical bubbles in the high temperature limit~\cite{Linde:1981zj,Linde:1980tt}. These developments have established the spherical bounce picture as the basis of nucleation calculations. Notably, a critical bubble can either expand or collapse under perturbations, illustrating its instability against changes in size. This provides a physical interpretation of the negative fluctuation mode of the bounce. Coleman further emphasized that a conventional tunneling bounce is characterized by one and
only one negative fluctuation mode~\cite{Coleman:1987rm}.

Strictly speaking, the dominance of the spherical bounce does not preclude the existence of nonspherical bubbles. In particular, nonspherical bubble nucleation can occur in the presence of black holes or topological defects~\cite{Hiscock:1995ma,Gregory:2013hja,Gorbunov:2026wrz,Dasgupta:1997kn,Chatrchyan:2025uar,Firouzjahi:2020hfq}. Quantum and thermal fluctuations can also induce aspherical deformations of expanding bubbles, allowing even isolated bubbles to emit gravitational waves and potentially modifying the high-frequency part of the gravitational wave spectrum from first-order phase transitions~\cite{Bian:2026xdm,Blum:2024hcs}. Interestingly, even in homogeneous flat spacetime without any external source of symmetry breaking, the existence of nonspherical Euclidean solutions had already been established through variational methods~\cite{bartsch1993infinitely,mederski2020nonradial}. Guided by the symmetry structure identified in these studies, a regular \(O(4)\)-breaking solution was recently constructed in a sextic scalar model~\cite{Avraham:2026mgk}. Although the reported solution possesses multiple negative modes at a representative benchmark, it remains an open question whether nonspherical bounce solutions with exactly one negative mode can exist in homogeneous flat spacetime and, if so, what their collective contribution to the vacuum decay rate would be.

In this work, we take a first step toward answering this question by rigorously proving that every localized nonspherical Euclidean saddle satisfying the zero velocity matching condition has at least three negative modes. This result rules out the existence of nonspherical bubbles with exactly one negative mode for a single real scalar field in homogeneous flat spacetime. The remainder of this paper is organized as follows. In Sec.~\ref{sec:II}, we establish the negative-mode theorem for nonspherical Euclidean saddles. In Sec.~\ref{sec:III}, we apply the theorem to a sextic scalar model and discuss its relation to previous studies. Finally, we summarize our conclusions in Sec.~\ref{sec:IV}.

\section{Negative-Mode Bounds for Nonspherical Saddles}\label{sec:II}

The Euclidean action of a single real scalar field in
four-dimensional flat spacetime is given by
\begin{equation}
    S_E[\phi]=\int_{\mathbb R^4}d^4x
    \left[
        \frac12(\partial_\mu\phi)^2+V(\phi)
    \right],
    \label{eq:euclidean_action}
\end{equation}
where $x=(\tau,\mathbf y)$, with $\tau$ denoting Euclidean
time and $\mathbf y\in\mathbb R^3$ the spatial coordinates.
The corresponding Euclidean field equation reads
\begin{equation}
    -\Delta \phi+V'(\phi)=0.
    \label{eq:euclidean_eom}
\end{equation}
Here $\Delta=\partial_\tau^2+\nabla_{\mathbf y}^2$
is the four-dimensional Euclidean Laplacian. We assume that $V\in C^3(\mathbb R)$ possesses at least
two local minima, corresponding to the false and true
vacua. Without loss of generality, we place the false
vacuum at $\phi=0$ and normalize the potential such
that $V(0)=V'(0)=0$. We further assume $V''(0)=m^2>0$.

Bounce solutions are nontrivial, bounded, localized
solutions $\phi_b\in C^2(\mathbb R^4)\cap H^1(\mathbb R^4)$
of Eq.~\eqref{eq:euclidean_eom}, satisfying the
boundary conditions
\begin{align}
    \phi_b(x)&\to0
    \qquad (|x|\to\infty), \notag\\
    \partial_\tau\phi_b(0,\mathbf y)&=0
    \qquad (\forall\,\mathbf y\in\mathbb R^3).
    \label{eq:bc_turning}
\end{align}
The second condition specifies a turning point in
Euclidean time, chosen at $\tau=0$, where the field
can be matched to a real-time configuration with
vanishing initial velocity~\cite{rubakov2009classical}.

To examine the stability properties of these solutions,
we consider small fluctuations $\phi_b\to\phi_b+\eta$.
Expanding the Euclidean action around a stationary
solution gives
\begin{equation}
    S_E[\phi_b+\eta]
    =S_E[\phi_b]+\frac12 Q_{\phi_b}[\eta]
    +O(\eta^3),
\end{equation}
where the quadratic form is
\begin{equation}
    Q_{\phi_b}[\eta]
    =\int_{\mathbb R^4}d^4x
    \left[
        (\partial_\mu\eta)^2+V''(\phi_b)\eta^2
    \right].
    \label{eq:quadratic_form}
\end{equation}
The corresponding fluctuation operator is
\begin{equation}
    \mathcal M_{\phi_b}
    =-\Delta +V''(\phi_b).
    \label{eq:fluctuation_operator}
\end{equation}
In the semiclassical description,
the tunneling bounce is characterized by
exactly one negative fluctuation
mode~\cite{Coleman:1987rm}. The number of such modes,
corresponding to negative eigenvalues of
$\mathcal M_{\phi_b}$, defines the Morse index of the
solution, denoted by $\operatorname{ind}(\phi_b)$.
Thus, a conventional tunneling bounce must satisfy
$\operatorname{ind}(\phi_b)=1$.

An equivalent mathematical characterization of the
Morse index is the maximal dimension of a linear
subspace $E\subset H^1(\mathbb R^4)$ for which
$Q_{\phi_b}[\eta]<0$ for every nonzero $\eta\in E$.
The Morse index can therefore be expressed as
\begin{equation}
    \operatorname{ind}(\phi_b)
    =\sup_{\substack{
        E\subset H^1(\mathbb R^4)\\
        Q_{\phi_b}|_E<0
    }}\dim E.
    \label{eq:morse_index}
\end{equation}
So the central question is whether the maximal
dimension of a negative-definite subspace can be
one for a nonspherical candidate bounce.

The key to answering this question lies in the translational zero mode $w=\partial_\tau\phi_b$, which satisfies $\mathcal M_{\phi_b}w=0$. As shown below, the turning condition implies
reflection symmetry in Euclidean time, while nonsphericity forces $w$ to change sign in the positive Euclidean-time half-space. Its positive and negative parts can then be used to construct independent directions with vanishing
quadratic form in both time-parity sectors. Although these directions have vanishing quadratic form, not all of them can be zero modes. The remaining independent directions impose a lower bound on the dimension of the negative eigenspace. We now make this argument precise.

\begin{lemma}[Sign-changing property of nonspherical solutions]
\label{lem:one_sign}
Under the above assumptions, every nontrivial stationary
solution that does not change sign is radially symmetric
about some point in $\mathbb R^4$. Consequently, every
nonspherical stationary solution must change sign.
\end{lemma}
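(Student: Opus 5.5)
The plan is to reduce the statement to the Gidas--Ni--Nirenberg symmetry theorem for positive solutions on all of $\mathbb R^4$, in the form proved by the moving-plane method (Gidas--Ni--Nirenberg for entire space, with the refinement of Li and Li--Ni covering general $C^1$ nonlinearities with $f'(0)<0$).

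\textbf{Sign reduction.} Let $\phi_b$ be a nontrivial stationary solution that does not change sign. Replacing $\phi$ by $-\phi$ and $V(\phi)$ by $V(-\phi)$ if necessary, we may assume $\phi_b\ge0$. This substitution preserves all the hypotheses $V(0)=V'(0)=0$ and $V''(0)=m^2>0$. Write the field equation as $\Delta\phi_b=f(\phi_b)$ with $f=V'$. Then $f(0)=0$, $f'(0)=m^2>0$, and $f\in C^2$.

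\textbf{Strict positivity.} Write $f(\phi)=c(x)\phi$, where
\[
c(x)=\int_0^1 V''(s\phi_b(x))\,ds
\]
is bounded, since $\phi_b$ is bounded. The equation becomes $-\Delta\phi_b+c\,\phi_b=0$ with $\phi_b\ge0$. The strong maximum principle (Harnack inequality) then forces either $\phi_b\equiv0$ or $\phi_b>0$ everywhere. Nontriviality excludes the first case, so $\phi_b>0$.

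\textbf{Decay estimates.} Because $\phi_b\to0$ at infinity and $V''(0)=m^2>0$, we have $c(x)\ge m^2/2$ outside a large ball. A comparison argument with the barrier $Ce^{-m|x|/2}$ (or with the Yukawa-type Green function) gives exponential decay of $\phi_b$. Elliptic estimates extend this decay to $\nabla\phi_b$. These are precisely the decay hypotheses the moving-plane method needs on $\mathbb R^n$.

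\textbf{Moving planes.} The key structural input is monotonicity of the nonlinearity near zero. With $-\Delta u=g(u)$ and $g=-V'$, we have $g'(0)=-m^2<0$, so $g$ is strictly decreasing on some interval $[0,\epsilon]$.

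Fix a direction $e$ and set $u_\lambda(x)=\phi_b(x_\lambda)$, where $x_\lambda$ is the reflection of $x$ in the plane $x\cdot e=\lambda$. Put $w_\lambda=u_\lambda-\phi_b$ on the half-space $\Sigma_\lambda=\{x\cdot e<\lambda\}$. The argument has three stages.

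1. For $\lambda$ very negative, $w_\lambda\ge0$ on $\Sigma_\lambda$. At a negative minimum of $w_\lambda$, both values of $\phi_b$ would be small, hence lie in the region where $g$ is decreasing, which gives a contradiction. A boundedness argument using decay handles the remaining region.
2. Let $\lambda_0$ be the supremum of the $\lambda$ for which $w_\lambda\ge0$ holds. One shows $w_{\lambda_0}\equiv0$. Otherwise the strong maximum principle gives $w_{\lambda_0}>0$, and the set where $w_\lambda\ge0$ is open in $\lambda$, which contradicts maximality. This openness uses compactness on a large ball together with the decay-region argument outside it.
3. Since $w_{\lambda_0}\equiv0$, $\phi_b$ is symmetric about a plane orthogonal to $e$.

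Doing this for every direction $e$ shows that $\phi_b$ is symmetric about a plane in every direction. Applying the argument to the four coordinate directions fixes a point $x_0$. Every other symmetry plane must pass through $x_0$, because a plane orthogonal to $e$ is unique whenever $\phi_b$ is strictly monotone along $e$ on each side of it. Hence $\phi_b$ is radially symmetric about $x_0$, and in fact radially decreasing.

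\textbf{Main obstacle.} The hardest step is stage 1, getting the moving plane started, because the domain is unbounded. I will rely on the exponential decay together with the local monotonicity of $g$ near $0$, following Gidas--Ni--Nirenberg and Li. In the paper it is reasonable to cite that theorem directly after verifying its hypotheses: $\phi_b>0$, $\phi_b\to0$, and $f'(0)>0$.

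The second statement of the lemma is then the contrapositive: a nonspherical stationary solution cannot be of one sign, so it must change sign. Note that "radially symmetric about some point" allows an arbitrary center, which accounts for translations.
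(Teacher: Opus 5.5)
Your proposal is correct and follows essentially the same route as the paper. Both arguments reduce to $\phi_b\ge0$ by a sign flip, use the strong maximum principle with a bounded zeroth-order coefficient to get $\phi_b>0$, and then cite the moving-plane symmetry theorem on $\mathbb R^4$ via $g'(0)=-m^2<0$. Your exponential-decay estimates and your sketch of the moving-plane stages are not needed once the Li--Ni-type theorem is cited, since it only requires $\phi_b\to0$ and $g$ nonincreasing near $0$, but they are harmless.
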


\begin{proof}
Suppose $\phi_b\geq0$. Since $V'(0)=0$, the field
equation can be written as
$-\Delta\phi_b+c(x)\phi_b=0$, where
$c(x)=V'(\phi_b)/\phi_b$ extends continuously
to $\phi_b=0$. By the strong maximum principle,
a nontrivial nonnegative solution cannot vanish
at any finite point, implying $\phi_b>0$
throughout $\mathbb R^4$~\cite{gidas1979symmetry}.

Writing the field equation as
$\Delta\phi_b+g(\phi_b)=0$, with $g=-V'$,
we have $g'(0)=-m^2<0$. Thus, $g$ is decreasing
near zero. The moving-plane theorem then implies that the positive
solution $\phi_b$, which vanishes at infinity,
is radially symmetric about some point in
$\mathbb R^4$~\cite{Li01011993} .

For $\phi_b\leq0$, the same argument applies
to $v=-\phi_b\geq0$, which satisfies
$\Delta v+g_-(v)=0$, with $g_-(v)=V'(-v)$
and $g_-'(0)=-m^2<0$. Hence, every nontrivial
stationary solution that does not change sign
must be radially symmetric. By contraposition,
every nonspherical stationary solution must
change sign.
\end{proof}

\begin{lemma}[Time-reflection symmetry]
\label{lem:reflection}
Any solution $\phi_b$ satisfying the Euclidean field
equation and the turning condition
$\partial_\tau\phi_b(0,\mathbf y)=0$ is symmetric
under Euclidean-time reflection,
\begin{equation}
    \phi_b(-\tau,\mathbf y)
    =\phi_b(\tau,\mathbf y).
    \label{eq:reflection_symmetry}
\end{equation}
\end{lemma}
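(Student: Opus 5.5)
The plan is a uniqueness argument for the Cauchy problem in Euclidean time, carried out on the difference between $\phi_b$ and its reflection. Set $\tilde\phi(\tau,\mathbf y)=\phi_b(-\tau,\mathbf y)$. Since the Laplacian $\Delta=\partial_\tau^2+\nabla_{\mathbf y}^2$ is invariant under $\tau\to-\tau$, $\tilde\phi$ solves Eq.~\eqref{eq:euclidean_eom}. At $\tau=0$ the two functions have the same Cauchy data:
\begin{equation}
\tilde\phi(0,\mathbf y)=\phi_b(0,\mathbf y),\qquad
\partial_\tau\tilde\phi(0,\mathbf y)=-\partial_\tau\phi_b(0,\mathbf y)=0=\partial_\tau\phi_b(0,\mathbf y),
\end{equation}
where the second relation uses the turning condition~\eqref{eq:bc_turning}. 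The difference $u=\phi_b-\tilde\phi$ therefore satisfies the linear equation
\begin{equation}
-\Delta u+c(x)u=0,\qquad
c(x)=\int_0^1V''\bigl(s\phi_b+(1-s)\tilde\phi\bigr)\,ds .
\end{equation}
Because $V\in C^3$ and $\phi_b$ is bounded, $c$ is bounded and in fact $C^1$. The function $u$ has vanishing Cauchy data on the hyperplane $\{\tau=0\}$, and $u$ is odd in $\tau$.

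The next step is to show that $u\equiv0$. The Cauchy problem for an elliptic equation is ill-posed in the Hadamard sense, but uniqueness still holds. Holmgren's theorem is not directly usable, since it needs analytic coefficients and $c$ is only $C^1$. I would instead invoke unique continuation across a noncharacteristic hypersurface for second-order elliptic operators with bounded lower-order terms; a Carleman estimate of Calder\'on/H\"ormander type gives exactly this.

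Concretely, extend $u$ by zero across the hyperplane. Let $U$ equal $u$ for $\tau>0$ and $0$ for $\tau\le0$. Since both $u$ and $\partial_\tau u$ vanish on $\{\tau=0\}$, $U$ lies in $H^2_{\rm loc}$ and satisfies $-\Delta U+cU=0$ weakly on all of $\mathbb R^4$. The differential inequality $|\Delta U|\le \|c\|_\infty|U|$ then holds on $\mathbb R^4$, with $U$ vanishing on an open set. The weak unique continuation property for $|\Delta U|\le K|U|$, which follows from Carleman estimates with weight $|x-x_0|^{-\beta}$, forces $U\equiv0$. Hence $u=0$ for $\tau>0$; by oddness $u=0$ for $\tau<0$, giving Eq.~\eqref{eq:reflection_symmetry}.

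The main obstacle is this uniqueness step. One has to check carefully that the zero extension is a genuine $H^2_{\rm loc}$ solution, so that no distributional source sits on the hyperplane. This uses both vanishing Cauchy data and the regularity $\phi_b\in C^2$. One must also cite a unique continuation theorem whose hypotheses match a merely bounded potential. As a fallback, if $V$ were analytic, elliptic regularity would make $\phi_b$ real-analytic and Cauchy--Kovalevskaya/Holmgren uniqueness would apply directly. The decay conditions in~\eqref{eq:bc_turning} play no role beyond guaranteeing boundedness of $c$.
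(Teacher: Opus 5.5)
Your proposal is correct and follows the paper's proof essentially step for step. You reflect in $\tau$, write the difference as a solution of $-\Delta h+c(x)h=0$ with bounded $c$ and vanishing Cauchy data on $\tau=0$, extend by zero across the turning slice (so no interface term appears), and invoke unique continuation for elliptic operators with bounded potentials; the only cosmetic difference is that you extend globally and finish with the oddness of $u$, whereas the paper extends within a neighborhood of the slice and then propagates $h\equiv0$ over the connected $\mathbb R^4$.
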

\begin{proof}
Consider the reflected field
$\widetilde{\phi}_b(\tau,\mathbf y)
=\phi_b(-\tau,\mathbf y)$, which satisfies the
same Euclidean field equation as $\phi_b$.
Their difference,
$h=\phi_b-\widetilde{\phi}_b$, obeys
\begin{equation}
    -\Delta h+c(x)h=0,
    \qquad
    c(x)=\int_0^1
    V''\bigl(\widetilde{\phi}_b+s h\bigr)\,ds.
    \label{eq:reflection_difference}
\end{equation}
Since $V\in C^3(\mathbb R)$ and $\phi_b$ is
bounded, the coefficient $c(x)$ is bounded.
By construction and the turning condition,
\begin{equation}
    h(0,\mathbf y)=0,
    \qquad
    \partial_\tau h(0,\mathbf y)=0.
\end{equation}
The unique continuation for elliptic equations implies that $h$ vanishes identically~\cite{klein2016quantitative}. Consequently,
\begin{equation}
    \phi_b(-\tau,\mathbf y)
    =\phi_b(\tau,\mathbf y),
\end{equation}
establishing reflection symmetry in Euclidean time.
\end{proof}

The two lemmas together constrain the sign
structure of the translational zero mode.
Differentiating Eq.~\eqref{eq:euclidean_eom}
with respect to $\tau$ gives
\begin{equation}
    w=\partial_\tau\phi_b,
    \qquad \mathcal M_{\phi_b}w=0.
    \label{eq:translation_zero_mode}
\end{equation}
By Lemma~\ref{lem:reflection}, $\phi_b$ is even
in $\tau$, and hence $w$ is odd.
For a nonspherical solution,
Lemma~\ref{lem:one_sign} ensures that $\phi_b$
changes sign. Together with time-reflection
symmetry, this implies that both positive and
negative field values occur in the half-space
$\tau>0$. Since $\phi_b(\tau,\mathbf y)\to0$
as $\tau\to\infty$, positive field values must
decrease somewhere along the Euclidean-time
direction, while negative values must increase.
Consequently, $w$ takes both signs in $\tau>0$, as illustrated schematically in Fig.~\ref{fig:sign_change}.
\begin{figure*}[t]
    \centering
    \includegraphics[width=\textwidth]{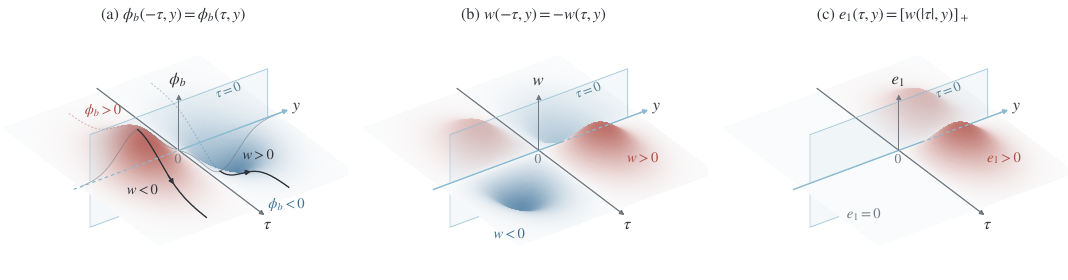}
    \caption{Schematic illustration of the nodal construction underlying our main theorem. (a) \textit{Left}: Sign-changing and time-reflection nature of a nonspherical saddle (see Lemmata), which guarantees the translational zero mode $w=\partial_\tau\phi_b$ takes both signs in the half-space $\tau>0$. (b) \textit{Middle}: The $\tau$-odd translational zero mode, whose positive and negative nodal domains allow the construction of independent trial fluctuations. (c) \textit{Right}: The $\tau$-even trial fluctuation $e_1=[w(|\tau|,y)]_+$ obtained by reflecting the positive part of $w$ across the turning slice. Although $Q[e_1]=0$, it is not a zero mode of the Hessian. However, if $m_+\leq 1$, a nontrivial combination of $e_1$ and its negative-domain counterpart $e_2$ would necessarily be a zero mode, contradicting unique continuation.}
    \label{fig:sign_change}
\end{figure*}

To exploit this sign structure, we decompose
the fluctuation space according to Euclidean-time
parity. Any fluctuation $\eta\in H^1(\mathbb R^4)$
can be uniquely written as $\eta=\eta_++\eta_-$,
where
\begin{equation}
    \eta_\pm(\tau,\mathbf y)
    =\frac{1}{2}
    \left[
        \eta(\tau,\mathbf y)
        \pm\eta(-\tau,\mathbf y)
    \right].
    \label{eq:parity_decomposition}
\end{equation}
The even and odd components belong to the
subspaces
\begin{equation}
    \mathcal H_\pm=
    \left\{
        \eta\in H^1(\mathbb R^4):
        \eta(-\tau,\mathbf y)
        =\pm\eta(\tau,\mathbf y)
    \right\},
    \label{eq:parity_spaces}
\end{equation}
which form the direct sum
$H^1(\mathbb R^4)=\mathcal H_+\oplus\mathcal H_-$.

Since $\phi_b$ is even in $\tau$, so is
$V''(\phi_b)$. The fluctuation operator
$\mathcal M_{\phi_b}$ therefore preserves
each parity sector. Moreover, all cross terms
between $\eta_+$ and $\eta_-$ in the quadratic
form are odd in $\tau$ and vanish upon integration
over $\mathbb R^4$. Thus,
\begin{equation}
    Q_{\phi_b}[\eta]
    =Q_{\phi_b}[\eta_+]
    +Q_{\phi_b}[\eta_-].
    \label{eq:quadratic_decomposition}
\end{equation}
The full Morse index consequently decomposes
as $\operatorname{ind}(\phi_b)=m_++m_-$,
where $m_\pm$ denote the maximal dimensions
of negative-definite subspaces of $Q_{\phi_b}$
within $\mathcal H_\pm$, respectively. With these ingredients in place, we can now state our main theorem.

\begin{theorem}[Negative-mode bounds, \textbf{main}]
\label{thm:main}
Under the assumptions stated above, every nonspherical
candidate bounce $\phi_b$ satisfies
\begin{equation}
    m_+\geq2,\qquad m_-\geq1,
    \qquad \operatorname{ind}(\phi_b)\geq3.
    \label{eq:main_bound}
\end{equation}
\end{theorem}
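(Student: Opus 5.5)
The plan is to build trial subspaces from the nodal pieces of the translational zero mode $w=\partial_\tau\phi_b$, separately in each parity sector, and then upgrade ``nonpositive'' to ``negative'' by a unique-continuation argument. Let $\Omega_\pm=\{x:\tau>0,\ \pm w(x)>0\}$; both are nonempty by the sign discussion following Lemma~\ref{lem:reflection}. Set $w_+=\max(w,0)\chi_{\{\tau>0\}}$ and $w_-=\max(-w,0)\chi_{\{\tau>0\}}$. These lie in $H^1$ because $w$ vanishes on $\tau=0$, being odd. Multiplying $\mathcal M_{\phi_b}w=0$ by $w_\pm$ and integrating by parts over $\Omega_\pm$ gives $Q[w_\pm]=0$, using the decay of $w$ at infinity. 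Since the supports are disjoint, $Q[a w_++b w_-]=0$ for all $a,b$.

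\textbf{Odd sector.} Define $o_\pm(\tau,\mathbf y)=w_\pm(\tau,\mathbf y)-w_\pm(-\tau,\mathbf y)\in\mathcal H_-$. The two functions $o_\pm$ have disjoint supports, so the span $F_-=\mathrm{span}\{o_+,o_-\}$ is two-dimensional, and $Q$ vanishes identically on it. Suppose $m_-=0$, so that $Q\ge0$ on $\mathcal H_-$. A standard variational fact then applies: a minimizer of a nonnegative quadratic form with $Q[\eta]=0$ lies in its kernel. Hence every element of $F_-$ solves $\mathcal M_{\phi_b}\eta=0$ weakly. In particular $o_+$ is then a solution that vanishes on the open set $\Omega_-$, so by unique continuation $o_+\equiv0$, a contradiction. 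Therefore $m_-\ge1$.

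\textbf{Even sector.} Define $e_\pm(\tau,\mathbf y)=w_\pm(|\tau|,\mathbf y)\in\mathcal H_+$, so that $Q[e_\pm]=0$ and $Q$ vanishes on $F_+=\mathrm{span}\{e_+,e_-\}$. To get $m_+\ge2$ I would use a second ingredient: the spherical part of the problem supplies an even negative direction. The natural candidate is a dilation-type direction $x\cdot\nabla\phi_b$ (even in $\tau$) or $\phi_b$ itself. A Derrick-type scaling argument in $d=4$ gives $Q[x\cdot\nabla\phi_b+\phi_b]<0$, or more simply $Q[\phi_b]=\int\bigl(V''(\phi_b)\phi_b^2-V'(\phi_b)\phi_b\bigr)$. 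I would try the scaling identity first. Once one has $\psi\in\mathcal H_+$ with $Q[\psi]<0$, consider the three-dimensional space $G=\mathrm{span}\{\psi,e_+,e_-\}$. Suppose $m_+\le1$. Then $Q$ restricted to $\mathcal H_+$ has at most one negative direction, so on the two-dimensional $Q$-null subspace $F_+$ the form must be degenerate in a strong sense. Concretely, take the negative eigenfunction $\psi_0$ of $\mathcal M$ in $\mathcal H_+$. The combination $\eta=a e_++b e_-$ that is $L^2$-orthogonal to $\psi_0$ (a nontrivial one exists since $\dim F_+=2$) satisfies $Q[\eta]=0$ while $Q\ge0$ on $\psi_0^\perp\cap\mathcal H_+$. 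Hence $\eta$ is a zero mode. If $a\ne0$ it vanishes on the open set $\{\tau>0\}\cap\Omega_-$, and if $a=0$ it vanishes on $\Omega_+$; either way unique continuation forces $\eta\equiv0$, a contradiction. This argument needs only the existence of some negative even direction, which holds since the ground state of $\mathcal M$ is even and negative. That ground-state negativity follows from $Q[\phi_b]<0$ or, more robustly, from the fact that $e_+$ is not an eigenfunction: $e_+\ge0$ has a kink at $\tau=0$, where $\partial_\tau w\neq0$ somewhere, so the minimum of the Rayleigh quotient is strictly below $0$. Thus $m_+\ge2$, and summing the sectors gives $\operatorname{ind}\ge3$.

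The main obstacle is the rigorous justification that $e_\pm$ and $o_\pm$ are not genuine weak solutions. This requires that the relevant nodal sets have nonempty interior (true since $w$ is continuous and takes both signs), that $w\in H^1$ with enough decay, and, for the ground-state negativity in $\mathcal H_+$, that $\partial_\tau w=\partial_\tau^2\phi_b$ does not vanish identically on $\{\tau=0\}\cap\overline{\Omega_+}$. I would handle the last point by Cauchy uniqueness: if both $w$ and $\partial_\tau w$ vanished on the slice, then $w\equiv0$, i.e.\ $\phi_b$ would be $\tau$-independent, which contradicts localization.
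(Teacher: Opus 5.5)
Your odd-sector argument is correct and coincides with the paper's. The even sector, however, has a genuine gap. When $m_+\le1$, the spectral argument gives you a single zero mode $\eta=ae_++be_-$. Its coefficients are fixed by orthogonality to $\psi_0$, and your case analysis does not handle $ab\neq0$.

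For such $\eta$, the claim that it ``vanishes on $\Omega_-$'' is false. Indeed $\eta=aw$ on $\Omega_+$ and $\eta=-bw$ on $\Omega_-$. So $\eta$ vanishes only on the reflected nodal set of $w$, which has empty interior by unique continuation applied to $w$ itself. There is no open set on which to run your argument.

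Moreover, $ab\neq0$ is the only case that actually occurs. Since $m_-\ge1$, the bottom of the spectrum is a negative, simple eigenvalue. Its eigenfunction is strictly positive and hence $\tau$-even. So if $m_+=1$, then $\psi_0$ is this ground state, $\langle e_\pm,\psi_0\rangle>0$, and orthogonality forces $ab<0$.

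The missing idea is to compare $\eta$ with the known zero mode $w$, rather than with $0$. On the open set $\Omega_+$ one has $\eta-aw=0$, and $\eta-aw\in\ker\mathcal M_{\phi_b}$. Unique continuation then gives $\eta\equiv aw$ on $\mathbb R^4$. Since $\eta\in\mathcal H_+$ and $aw\in\mathcal H_-$, both vanish identically. This contradicts $\eta\neq0$, because $e_\pm$ are nonzero with disjoint supports. This parity clash between the even trial function and the odd translational mode is exactly the step the paper uses to obtain $m_+\ge2$.

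The Derrick-type direction, the space $G$, and the kink argument are not needed. If $m_+=0$, then $e_+$ itself is a zero mode vanishing on $\Omega_-$; otherwise the argument above applies. The space $G$ could not have helped anyway. A quadratic form on a three-dimensional space containing a two-dimensional subspace on which it vanishes identically has at most one negative direction.

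The kink justification is also incorrect as stated. Cauchy uniqueness shows only that $\partial_\tau w\not\equiv0$ on the whole slice $\{\tau=0\}$. It says nothing about $\{\tau=0\}\cap\overline{\Omega_+}$, which may even be empty.

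Finally, the ``main obstacle'' you identify is not the real difficulty. Showing that $e_\pm$ and $o_\pm$ are not weak solutions is immediate from unique continuation, since each vanishes on an open set. The real difficulty is the mixed combination $\eta$, which your proposal does not resolve.
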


\begin{proof}
Let $D=\{(\tau,\mathbf y):\tau>0\}$ and separate
the translational zero mode into its positive
and negative parts on $D$,
\begin{equation}
    p_1=\max\{w,0\},\qquad
    p_2=\max\{-w,0\}.
    \label{eq:zero_mode_parts}
\end{equation}
Both are nonzero because $w$ changes sign in $D$.
Multiplying $\mathcal M_{\phi_b}w=0$ by $p_1$
and $-p_2$, respectively, and integrating over $D$
gives $Q_D[p_1]=Q_D[p_2]=0$, where $Q_D$ denotes
the quadratic form restricted to $D$. 
A technical subtlety arises from the possible
nonsmoothness of $p_1$ and $p_2$ across the nodal
set of $w$. Their admissibility as weak test
functions is justified in Sec.~\ref{sec:supp_truncation} of the \textit{Supplement Material}.

We extend $p_1$ and $p_2$ from the half-space
$D$ to $\mathbb R^4$ by even and odd reflection
across $\tau=0$:
\begin{align}
    e_i(\tau,\mathbf y)
    &=p_i(|\tau|,\mathbf y),\notag\\
    o_i(\tau,\mathbf y)
    &=\operatorname{sgn}(\tau)p_i(|\tau|,\mathbf y),
    \qquad i=1,2.
    \label{eq:parity_extensions}
\end{align}
Consequently, $e_i(-\tau,\mathbf y)=e_i(\tau,\mathbf y)$
and $o_i(-\tau,\mathbf y)=-o_i(\tau,\mathbf y)$.
Since $p_i$ vanishes on the turning slice,
both extensions belong to $H^1(\mathbb R^4)$.
Thus, $e_i\in\mathcal H_+$ and $o_i\in\mathcal H_-$.
For illustration, a schematic profile of the trail function $e_1$ is shown in Fig.~\ref{fig:sign_change}.

These extensions provide two linearly independent
fluctuations in each parity sector: $e_1,e_2$
in $\mathcal H_+$ and $o_1,o_2$ in $\mathcal H_-$.
Their independence follows from the fact that
$p_1$ and $p_2$ are nonzero in disjoint sign
regions of $w$. Moreover, the corresponding
mixed terms in the quadratic form vanish,
since the two parts and their gradients do not
overlap almost everywhere. Combining this with
$Q_D[p_1]=Q_D[p_2]=0$ and reflection symmetry,
we obtain
\begin{equation}
    Q_{\phi_b}[ae_1+be_2]
    =Q_{\phi_b}[ao_1+bo_2]=0
    \label{eq:zero_form_combinations}
\end{equation}
for arbitrary real $a,b$. Thus, each parity
sector contains a two-dimensional subspace
on which the quadratic form vanishes identically.
Importantly, these fluctuations need not be
zero modes of $\mathcal M_{\phi_b}$.

To relate these directions to the negative-mode
spectrum, we invoke a standard consequence
of the spectral theorem: in either parity sector,
the quadratic form is nonnegative on the subspace
orthogonal to all negative modes, and it vanishes
there only for zero modes. The spectral
justification is given in Sec.~\ref{sec:supp_spectral} of the \textit{Supplement Material}.

Consider first the even sector and suppose,
for contradiction, that $m_+\leq1$. This means
that there is at most one linearly independent
negative mode in the even sector. Since $e_1$
and $e_2$ span a two-dimensional space, we can
always find a nonzero combination $z=ae_1+be_2$
that is orthogonal to this negative mode, if
present. By Eq.~\eqref{eq:zero_form_combinations},
$Q_{\phi_b}[z]=0$. The spectral property
stated above then implies that
$\mathcal M_{\phi_b}z=0$.

However, such a zero mode cannot exist.
In the nonempty region where $w>0$ and
$\tau>0$, we have $e_1=w$ and $e_2=0$,
so $z=aw$. Since both $z$ and $w$ satisfy
the zero-mode equation, unique continuation for elliptic equations extends this equality to all of $\mathbb R^4$.
This is impossible because $z$ is even
whereas $w$ is odd under time reflection,
forcing $z\equiv0$, in contradiction with
our choice of a nonzero $z$. We therefore
conclude that $m_+\geq2$.

We next consider the odd sector and suppose,
for contradiction, that $m_-=0$, so that no
odd negative modes exist. Since $o_1$ is a
nonzero odd fluctuation satisfying
$Q_{\phi_b}[o_1]=0$, the spectral property
stated above implies that
$\mathcal M_{\phi_b}o_1=0$.

This conclusion is incompatible with the
construction of $o_1$. By definition, $o_1$
vanishes in the nonempty open region where
$w<0$ and $\tau>0$, while remaining nonzero
where $w>0$. Since $o_1$ satisfies the
zero-mode equation, unique continuation
would force it to vanish identically
throughout $\mathbb R^4$, yielding a
contradiction. Hence $m_-\geq1$.

Combining the two sectors, we obtain
$\operatorname{ind}(\phi_b)=m_++m_-\geq3$,
thereby ruling out nonspherical bounce
candidates with only one negative mode.
\end{proof}

For comparison, the conventional
$\mathrm O(4)$-symmetric bounce possesses
a single negative mode associated with a
radial fluctuation. This mode is even under
Euclidean time reflection and therefore belongs
to $\mathcal H_+$, giving $(m_+,m_-)=(1,0)$.
Additional spatial reflection symmetries of
nonspherical solutions can further strengthen
the bounds established above, as stated in
the following corollary.

\begin{corollary}[Additional reflection symmetries]
\label{cor:reflection}
Suppose that a nonspherical solution $\phi_b$
satisfying the assumptions of
Theorem~\ref{thm:main} is invariant under
reflections across $k$ mutually orthogonal
hyperplanes, including the turning slice
$\tau=0$. Then
\begin{equation}
    m_+\geq k+1,\qquad
    m_-\geq1,\qquad 1\leq k\leq4,
    \label{eq:reflection_bound}
\end{equation}
and hence $\operatorname{ind}(\phi_b)\geq k+2$.
\end{corollary}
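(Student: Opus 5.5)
Without loss of generality I take the hyperplanes to be the coordinate hyperplanes $x_1=0,\dots,x_k=0$, with $x_1=\tau$. The proof then follows the proof of Theorem~\ref{thm:main}, but now uses the finer parity decomposition generated by the group $\mathbb Z_2^k$ of these reflections. Each reflection $R_j$ ($x_j\mapsto -x_j$) commutes with $\mathcal M_{\phi_b}$, because $V''(\phi_b)$ is invariant. So $H^1(\mathbb R^4)$ splits into $2^k$ joint parity sectors $\mathcal H_\sigma$, $\sigma\in\{\pm\}^k$. The quadratic form is block diagonal across these sectors, exactly as in Eq.~\eqref{eq:quadratic_decomposition}, and $\operatorname{ind}(\phi_b)=\sum_\sigma m_\sigma$. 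The bound $m_-\geq1$ is already given by Theorem~\ref{thm:main}. The task is to find $k+1$ independent negative directions that are even in $\tau$.

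The $k$ translational zero modes $w_j=\partial_{x_j}\phi_b$ are the key ingredient. The mode $w_j$ is odd under $R_j$ and even under every other $R_i$. I will reuse the mechanism that produced $m_+\geq2$. The first step is to show that each $w_j$ changes sign in the half-space $D_j=\{x_j>0\}$. This follows as in the discussion after Lemma~\ref{lem:reflection}: $\phi_b$ changes sign (Lemma~\ref{lem:one_sign}), it is even in $x_j$, and it decays along $x_j$. I then set $p^{(j)}_{1,2}=[\pm w_j]_+$ on $D_j$ and reflect them evenly to get $e^{(j)}_{1,2}$. The first pair $e^{(1)}_{1,2}$ lies in the sector that is even in all $k$ reflections; call it $\sigma_0=(+,\dots,+)$. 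For $j\geq2$, the pair $e^{(j)}_{1,2}$ is even in $x_j$ and in all other reflections, so it also lies in $\sigma_0$. A simple dimension count inside the single sector $\sigma_0$ is too weak, however. Instead, I will use odd reflections in the other directions: for $j\geq2$, set $\tilde o^{(j)}=\operatorname{sgn}(x_j)\,p^{(j)}_1(|x_j|,\cdot)$. This function is even in $\tau$ and in all $x_i$ with $i\neq j$, but odd in $x_j$. So it lies in the sector $\sigma_j$ that is even in every reflection except $R_j$; in particular it lies in $\mathcal H_+$. It satisfies $Q[\tilde o^{(j)}]=0$ by the same truncation argument as in Sec.~\ref{sec:supp_truncation}.

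Within each sector $\sigma_j$ with $j\geq2$, I will argue as in the odd-sector part of Theorem~\ref{thm:main}. If $m_{\sigma_j}=0$, the spectral property of Sec.~\ref{sec:supp_spectral}, applied to the restriction to $\mathcal H_{\sigma_j}$ (which is legitimate because $\mathcal M_{\phi_b}$ preserves the sector), forces $\mathcal M_{\phi_b}\tilde o^{(j)}=0$. But $\tilde o^{(j)}$ vanishes on the nonempty open set $\{x_j>0,\ w_j<0\}$, so unique continuation gives a contradiction. Hence $m_{\sigma_j}\geq1$ for each $j=2,\dots,k$. These are $k-1$ distinct sectors, all contained in $\mathcal H_+$, and all different from $\sigma_0$. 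In $\sigma_0$ itself, the pair $e^{(1)}_{1,2}$ together with the argument from Theorem~\ref{thm:main} gives $m_{\sigma_0}\geq1$: a combination orthogonal to the zero negative modes would be a zero mode equal to $aw_1$ on an open set, and this contradicts parity in $\tau$.

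The main obstacle is the last step. In Theorem~\ref{thm:main}, $m_+\geq2$ was obtained inside all of $\mathcal H_+$, and the two even negative modes may both live in $\sigma_0$. After refining the sectors, I therefore need to show $m_{\sigma_0}\geq2$, not just $\geq1$. I would first try to run the $m_+\leq1$ contradiction argument directly with $e^{(1)}_{1,2}$ inside $\sigma_0$. This requires checking that $e^{(1)}_{1,2}$ are already invariant under $R_2,\dots,R_k$, which holds because $w_1$ is even in those directions and the truncation commutes with them. Then a nonzero combination orthogonal to a single $\sigma_0$ negative mode is a zero mode that coincides with $aw_1$ on an open set, and the argument yields $m_{\sigma_0}\geq2$. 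Summing gives $m_+\geq m_{\sigma_0}+\sum_{j=2}^k m_{\sigma_j}\geq 2+(k-1)=k+1$. Combined with $m_-\geq1$, this gives $\operatorname{ind}(\phi_b)\geq k+2$. The restriction $k\leq4$ is simply the number of mutually orthogonal hyperplanes available in $\mathbb R^4$.
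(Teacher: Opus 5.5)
Your proposal is correct and follows essentially the same route as the paper's proof. The jointly even sector gets two negative modes from the $Q$-isotropic pair $e_1,e_2$ built from $w=\partial_\tau\phi_b$, which is automatically even under the spatial mirrors, and each sector odd under exactly one extra mirror $R_j$ gets one from the oddly extended positive part of $w_j=\partial_{x_j}\phi_b$ — your $\tilde o^{(j)}$ is exactly the paper's $q_j$, and the detour through $e^{(j)}_{1,2}$ for $j\geq 2$ is unnecessary.
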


\begin{proof}
Since $w=\partial_\tau\phi_b$ is even under
all spatial reflections, the fluctuations
$e_1$ and $e_2$ constructed in
Theorem~\ref{thm:main} belong to the sector
even under all $k$ reflections. The same
argument therefore guarantees at least
two negative modes in this sector.
For each of the $k-1$ additional spatial
mirrors, the corresponding translational
zero mode is odd under that reflection
and even under all others. The sign-truncation
and spectral arguments then establish at
least one negative mode in each distinct
time-even parity sector. Since these sectors
are mutually orthogonal, we obtain
$m_+\geq2+(k-1)=k+1$.
The bound $m_-\geq1$ follows from
Theorem~\ref{thm:main}, with further details
provided in Sec.~\ref{sec:supp_reflections} of the \textit{Supplement Material}.
\end{proof}

\section{Application: A Sextic Scalar Model} \label{sec:III}
As a direct application of our main theorem,
we consider the sextic scalar model studied
in Ref.~\cite{Avraham:2026mgk}, with the potential
\begin{equation}
 V_\gamma(\phi)=\frac12\phi^2-\frac14\phi^4+
 \frac{\gamma}{6}\phi^6.\label{eq:sextic}
\end{equation}
For $0<\gamma<3/16$, the origin $\phi=0$ is a metastable vacuum with positive mass squared, while two degenerate nonzero minima lie at lower potential energy. 

This model admits nonspherical, sign-changing
Euclidean saddles with $O(2)\times O(2)$ symmetry,
as follows from the variational theorem of
Ref.~\cite{mederski2020nonradial} together with the
high-field truncation procedure described therein. A solution with these symmetries was recently constructed numerically~\cite{Avraham:2026mgk}. Writing $x=(x_1,x_2,x_3,x_4)\in\mathbb R^4$, the symmetry implies that its field profile takes the form 
\begin{equation}
    \phi_b(x) = \phi_b(r,s),\quad r=\sqrt{x_1^2+x_2^2},\quad s=\sqrt{x_3^2+x_4^2},
\end{equation}
with the additional exchange antisymmetry $\phi_b(r,s)=-\phi_b(s,r)$. Consequently, the solution is invariant under four mutually orthogonal Cartesian reflections, each providing an exact flat turning slice upon choosing the corresponding coordinate as Euclidean time. Applying Corollary~\ref{cor:reflection} with $k=4$ gives
\begin{equation}\label{eq:sextic_bound}
 m_+\ge5,\qquad m_-\ge1,
 \qquad \ind_{\mathrm{full}}\ge6. 
\end{equation}

At $\gamma=0.1$, Ref.~\cite{Avraham:2026mgk} reported 15 negative modes, comprising three $(0,0)$ modes and three fourfold-degenerate multiplets $(0,n)\oplus(n,0)$ with $n=1,2,3$. Here, labels $(\ell_1,\ell_2)$ denote the angular Fourier indices in two orthogonal planes. Under $x_1$-reflection, the $(0,0)$ modes are all even, while each fourfold-degenerate multiplet contains three even modes and one odd mode. The reported spectrum therefore yields 
\begin{equation}
\begin{aligned}
    m_{+,\text{reported}} &= 12 > 5, \\
    m_{-,\text{reported}} &= 3 > 1, \\
    \ind_{\mathrm{full, reported}} &= 15 > 6.
\end{aligned}
\end{equation}
This parity decomposition is consistent with our analytic bounds, which hold throughout the whole range $0<\gamma<3/16$ without knowledge of the explicit saddle or its fluctuation spectrum. In particular, the symmetry and nodal structure enforce multiple time-even negative modes, ruling out the conventional single-negative-mode spectrum even when fluctuations are restricted to preserve the turning slice.

\section{Conclusion}\label{sec:IV}

In this work, we have shown that localized nonspherical Euclidean saddles of a single real scalar field in flat space have at least three negative modes under the regularity and zero-velocity matching conditions specified above. This establishes a \textit{no-go} statement for interpreting nonspherical saddles as Coleman-type bounces governing false-vacuum decay. 

This distinction is structural rather than merely quantitative. In Coleman's semiclassical description, saddles with additional negative modes fall outside the conventional tunneling interpretation \cite{Coleman:1987rm}. 
This distinction is further illustrated by studies of oscillating gravitational instantons, where additional negative modes have motivated an alternative interpretation in terms of unstable intermediate thermal configurations \cite{Lavrelashvili:2006cv, Battarra:2012vu}. From a broader Morse-theoretic perspective, higher-index saddles also reveal a richer critical-point structure of the Euclidean action landscape \cite{Witten:1982im, yin2020construction}. While these studies suggest possible roles for higher-index saddles beyond conventional tunneling, their physical interpretation requires an analysis
beyond the local Euclidean fluctuation spectrum,
for example through real-time functional
methods~\cite{Ai:2019fri}.

Earlier studies have established important connections between symmetry, nodal structure, and Morse index in bounded radial domains \cite{aftalion2004qualitative} and for biradial solutions \cite{abatangelo2011solutions, Avraham:2026mgk,abatangelo2010note}. Going beyond these symmetry-restricted or numerical results, this work establishes parity-resolved lower bounds on the full Morse index at a prescribed turning slice. The parity resolution is essential: time-even fluctuations preserve the zero-velocity matching condition, whereas a bound on the total Morse index alone would not exclude a single negative direction in this sector. By ruling out this possibility, our results provide a model-independent spectral criterion for nonspherical tunneling candidates. 

A broader question is what physical role, if any, higher-index nonspherical Euclidean saddles play in quantum tunneling.  Clarifying the possible significance of these saddles may offer new insights into quantum tunneling beyond the conventional single-bounce picture.

\begin{acknowledgments}
This work was supported by the National Natural Science Foundation of China (Grant No.~12605177),  the National Natural Science Foundation of Henan province (Grant No.~27AQ140008).
\end{acknowledgments}

\bibliographystyle{apsrev4-2}
\bibliography{references}

\clearpage
\onecolumngrid
\section*{Supplemental Material}

\setcounter{section}{0}
\renewcommand{\thesection}{S\arabic{section}}
\setcounter{equation}{0}
\renewcommand{\theequation}{S\arabic{equation}}

\section{Weak truncation of the translational zero mode}
\label{sec:supp_truncation}

The zero-mode construction in Theorem~\ref{thm:main}
uses the positive and negative parts of $w$ on
$D=\{\tau>0\}$. Their possible lack of smoothness at
the zeros of $w$ does not obstruct the argument: the
zero-mode equation can be tested in the weak sense, with
no assumption that the nodal boundaries are smooth.

Because $w$ is odd under time reflection, it has zero
trace on $\partial D=\{\tau=0\}$. The regularity
$w\in H^1$, used below, follows from the elliptic
estimates in Sec.~\ref{sec:supp_spectral}.
The positive and negative parts
$p_1=\max\{w,0\}$ and $p_2=\max\{-w,0\}$
belong to $H_0^1(D)$. The Sobolev chain rule gives
\begin{equation}
    \nabla p_1=\mathbf 1_{\{w>0\}}\nabla w,
    \qquad
    \nabla p_2=-\mathbf 1_{\{w<0\}}\nabla w
    \quad\text{a.e. in }D.
    \label{eq:supp_truncation_gradients}
\end{equation}
Here $\mathbf 1_A$ denotes the indicator function of
$A$. These identities show that $p_1$ and $p_2$, as
well as their weak gradients, have no overlap almost
everywhere. Testing $\mathcal M_{\phi_b}w=0$ with
$p_1$ and $-p_2$, respectively, therefore gives
\begin{equation}
    Q_D[p_1]=Q_D[p_2]=0,
    \label{eq:supp_halfspace_forms}
\end{equation}
with vanishing mixed quadratic term.

The zero trace also ensures that the even and odd
extensions $e_i,o_i$ of
Eq.~\eqref{eq:parity_extensions} belong to
$H^1(\mathbb R^4)$. Reflection symmetry doubles the
half-space quadratic form, so for any real $a,b$,
\begin{equation}
    Q_{\phi_b}[ae_1+be_2]
    =Q_{\phi_b}[ao_1+bo_2]=0.
    \label{eq:supp_extended_forms}
\end{equation}
It is important that $p_1,p_2$ are the \emph{full}
positive and negative parts on $D$, rather than
restrictions to selected connected nodal domains.
This choice preserves any additional spatial reflection
symmetries of $w$ and is therefore also needed for the
strengthened bound in Sec.~\ref{sec:supp_reflections}.

\section{Time reflection and spectral properties}
\label{sec:supp_spectral}

We collect two facts used in Sec.~II: the turning condition
implies time-reflection symmetry, and the quadratic-form
argument can be interpreted in terms of genuine negative
and zero modes. These facts also explain why the
sign-truncated fluctuations used in the proof are sufficient.

Let $\widetilde\phi_b(\tau,\mathbf y)
=\phi_b(-\tau,\mathbf y)$ and
$h=\phi_b-\widetilde\phi_b$. Both fields obey the same
Euclidean equation, so their difference satisfies
\begin{equation}
    (-\Delta+c(x))h=0,\qquad
    c(x)=\int_0^1 V''(\widetilde\phi_b+s h)\,ds.
    \label{eq:supp_reflection_eq}
\end{equation}
The coefficient $c$ is bounded because $\phi_b$ is
bounded and $V$ is smooth. By construction,
$h(0,\mathbf y)=0$; the turning condition also gives
$\partial_\tau h(0,\mathbf y)=0$. Extend $h$ from the
positive-time side of a small neighborhood of this plane
by zero to the negative-time side. Since both its value and
normal derivative vanish on the plane, this extension
introduces no interface term and still solves a linear
elliptic equation weakly. It vanishes on an open set, so
unique continuation for elliptic operators with bounded
potentials~\cite{klein2016quantitative} makes it vanish throughout the neighborhood. Applying the same property to $h$ on the connected space $\mathbb R^4$ yields $h\equiv0$.
This establishes the reflection property without assuming
that $V$ is analytic.

We next justify the spectral step in the main proof.
Since $V'(0)=0$, the bounded range of $\phi_b$ gives
$|V'(\phi_b)|\leq C|\phi_b|$. The field equation and
$\phi_b\in H^1$ then imply $\Delta\phi_b\in L^2$.
Standard elliptic regularity yields $\phi_b\in H^2$ and,
in particular, $w=\partial_\tau\phi_b\in H^1$.
Furthermore, writing
\begin{equation}
    \mathcal M_{\phi_b}
    =(-\Delta+m^2)+W(x),\qquad
    W(x)=V''(\phi_b(x))-m^2,
    \label{eq:supp_operator}
\end{equation}
we have a bounded $W$ that tends to zero at infinity.
Thus $\mathcal M_{\phi_b}$ is self-adjoint on $H^2$,
its essential spectrum begins at $m^2>0$, and its
negative eigenspace is finite-dimensional.

The consequence needed in the proof is simple.
In any reflection-parity sector, a fluctuation orthogonal
in $L^2$ to all negative modes has no negative spectral
component, so its quadratic form is nonnegative. If this
quadratic form also vanishes, the fluctuation must belong
to $\ker\mathcal M_{\phi_b}$. This is precisely the
spectral-theorem step used to turn a hypothetical shortage
of negative modes into a zero mode, which can then be
excluded by unique continuation.

\section{Additional reflection symmetries}
\label{sec:supp_reflections}

We now justify the stronger bound in
Corollary~\ref{cor:reflection}. Suppose the solution is
invariant under $k$ mutually orthogonal reflections,
including time reflection $R_0$, and denote the other
reflections by $R_j$ ($j=1,\ldots,k-1$). Since these
reflections commute with $\mathcal M_{\phi_b}$,
fluctuations can be grouped into joint-parity sectors.
In particular, let $\mathcal K_{\rm all}$ be the sector
even under every reflection, and $\mathcal K_j$ the
sector odd under $R_j$ but even under all the others.
All these sectors are time-even and are mutually
orthogonal for the quadratic form.

The first contribution comes from the jointly even
sector. The time-translation zero mode
$w=\partial_\tau\phi_b$ is even under every spatial
reflection. Its full positive and negative parts, and
hence their time-even extensions $e_1,e_2$, share
these spatial symmetries. Consequently,
$e_1,e_2\in\mathcal K_{\rm all}$, and their span has
identically vanishing quadratic form by
Eq.~\eqref{eq:supp_extended_forms}. If
$\mathcal K_{\rm all}$ had at most one negative mode,
a nonzero combination $z=ae_1+be_2$ could be
chosen orthogonal to all of them. The spectral argument
of Sec.~\ref{sec:supp_spectral} would then make $z$
a genuine zero mode. On the open set
$\{\tau>0,\,w>0\}$, however, $z=aw$.
Unique continuation would extend this equality to
all of $\mathbb R^4$, contrary to the fact that $z$
is time-even while $w$ is time-odd. Thus
$\mathcal K_{\rm all}$ contains at least two
negative modes.

Each additional spatial reflection supplies another
independent sector. Let $\mathbf n_j$ be the unit
normal to its reflection plane and choose a point $c$
in the common intersection of the planes. Translation
invariance of the field equation gives
\begin{equation}
    w_j=\mathbf n_j\cdot\nabla\phi_b,
    \qquad \mathcal M_{\phi_b}w_j=0.
    \label{eq:supp_spatial_zero_mode}
\end{equation}
This zero mode is odd under $R_j$ and even under
the remaining reflections, including $R_0$.
As in the time direction, the sign change of $\phi_b$,
its reflection symmetry, and its approach to zero at
spatial infinity imply that $w_j$ takes both signs on
one side of the corresponding plane.

On that half-space, take the positive part of $w_j$
and extend it oddly across the plane, calling the result
$q_j$. The same weak-truncation argument as in
Sec.~\ref{sec:supp_truncation} shows that $q_j$ is
nonzero, belongs to $\mathcal K_j$, and satisfies
$Q_{\phi_b}[q_j]=0$. Were there no negative mode
in $\mathcal K_j$, the spectral theorem would make
$q_j$ a zero mode. But $q_j$ vanishes on the nonempty
open region where $w_j<0$ on the chosen half-space,
contradicting unique continuation. Hence each of the
$k-1$ sectors $\mathcal K_j$ contains at least one
negative mode.

The jointly even sector therefore contributes at least
two negative directions, while the $k-1$ distinct
spatial-odd sectors contribute one each. Their mutual
orthogonality allows these contributions to be added,
giving
\begin{equation}
    m_+\geq 2+(k-1)=k+1.
    \label{eq:supp_improved_bound}
\end{equation}
Together with $m_-\geq1$ from
Theorem~\ref{thm:main}, this proves
Corollary~\ref{cor:reflection}.

\end{document}